\documentclass[a4paper,11pt]{article}

\usepackage{geometry}
\geometry{left=21mm,right=21mm,top=22mm,bottom=25mm}
\usepackage{amsfonts}
\usepackage{amsmath}
\usepackage{amsthm}
\usepackage{hyperref}
\usepackage{bm}
\usepackage{xcolor}
\usepackage{cite}
\usepackage{authblk}
\usepackage{mathrsfs}
\usepackage{verbatim}
\allowdisplaybreaks

\title{On a two-component Camassa-Holm equation}

\author{Zixin Zhang}
\author{Q. P. Liu\thanks{qpl@cumtb.edu.cn}}
\affil{Department of Mathematics, \\
China University of Mining and Technology,\\ Beijing 100083, People's Republic of China}

\date{}
 \begin{document}
\maketitle

\begin{abstract}
A two-component generalization of the Camassa-Holm equation and its reduction proposed recently by Xue, Du and Geng [Appl. Math. Lett. {\bf 146} (2023) 108795] are studied. For this two-component equation, its missing bi-Hamiltonian structure is constructed and a Miura transformation is introduced so that it may be regarded as a modification of the very first two-component Camassa-Holm equation. %[Phys. Rev. E {\bf 53} (1996) ; Lett. Math. Phys. {\bf 53 } (2006)].
 Using a proper reciprocal transformation, a particular reduction of this two-component equation, which admits $N-$ peakon solution,  is brought  to the celebrated Burgers equation.
\end{abstract}

\section{Introduction}
The following equation, known as the Camassa-Holm (CH) equation, 
\begin{equation}\label{eq0-0}
m_t+um_x+2u_xm=0, \quad m=u-u_{xx}
\end{equation}
%where $u=u(x,t)$ and the subscripts denote the partial derivatives
has been one of the most important integrable systems since the work of Camassa and Holm \cite{CH1993} and has been studied extensively (see \cite{Fuchss96,Schiff1998,BSS2000,Johnson2002,LZ2004,Parker2005,CGI2006,Monvel2007,CL2009,Feng10,Xia16,Rasin-Schiff2017,Li-Tian,Chang2024} and the references therein). In addition to the integrability, the most remarkable property enjoyed by the CH equation is that it possesses the peaked soliton or peakon solutions.  

Finding equations of CH type has been a hot topic in the soliton theory and consequently many new   equations have been constructed. In the single component case, the DP, Novikov and modified CH equations are  the most well-known ones \cite{DP, Novikov,olver1996,Qiao}. In the two or multi-component case, a number of  equations have been presented and the first example may be the two-component CH equation \cite{olver1996,clz2006}, which has studied from various viewpoints \cite{Wu2006,HI2011,Matsuno2017,Wang2020,Qu2024}. Recently, by means of the perturbative symmetry approach, Hone, Novikov and Wang classified  two-component CH equations and provided a list of such equations \cite{hnw2017}.

Very recently, based on a $2\times 2$ matrix spectral problem, Xue, Du and Geng proposed the following equation
\begin{align}\label{eq01}
    \left\{     
        \begin{aligned}
       n_{ t}&=4 [ n \left(v_{ x}+2 \beta v\right)]_{x},\\
    n&=4\beta^{2} v-v_{ x x},
\end{aligned}\right.
\end{align} 
and showed that it possesses $N-$peakon solution \cite{xdg2023}. Eq. \eqref{eq01} is a particular reduction of a two-component system, which is given by 
\begin{align}\label{eq1}
    \left\{     
        \begin{aligned}
    m_{ t}&=4(\partial_x m+m \partial_x)\left(v_{ x}+2 \beta v-u\right)+4 n  \left(u_{x}-2 \beta u-4 \alpha v \right)_x, \\
   n_{ t}&=4 \left[ n \left(v_{ x}+2 \beta v-u\right)\right]_x ,\\
   m&=u_{x x}-4\left(\alpha+\beta^{2}\right) u, \\
    n&=4\left(\alpha+\beta^{2}\right) v-v_{ x x},
\end{aligned}
\right.
\end{align} 
where $\alpha$ and $\beta$ are two constants, and $u=u(x,t), v=v(x,t)$. Apart from the $N-$peakon solution, those authors also constructed its infinitely many conservation laws.  For convenience, Eq. \eqref{eq1} will be referred to as the Xue-Du-Geng (XDG) equation in the ensuing discussion.

In this Letter, we aim to offer a better understanding of \eqref{eq01} and \eqref{eq1}. We shall construct two local Hamiltonian operators and show that Eq. \eqref{eq1} is a genuine bi-Hamiltonian system. We also introduce a Miura type transformation for Eq. \eqref{eq1} and demonstrate that this equation qualifies a modification of the 
well-known two-component CH (2-CH) equation \cite{clz2006, olver1996}. Furthermore, we explain the relationship between Eq.  \eqref{eq01} and the Burgers equation.  

This Letter is organized as follows. In the next section, Section 2, we apply the gradient spectral method and
calculate the missing bi-Hamiltonian operators for Eq. \eqref{eq1}. In Section 3, we build a transformation of Miura type which relates Eq. \eqref{eq1} to the 2-CH equation. In the last section, by means of reciprocal transformation, a link between Eq. \eqref{eq01} and the Burgers equation is established. 
%And its spectral problem is given by
%\begin{equation}  \label{key}
%\psi_{x}=U \psi, \quad\psi_{t}=V \psi ,  \\
%\end{equation}   
%where $\psi=(\psi_{1},\psi_{2})^{T}$, 
%\[
%U=\left(\begin{array}{cc}
%\beta +\lambda n & 1 \\
%\alpha +\lambda m & -\beta -\lambda n
%\end{array}\right), \quad V=\left(\begin{array}{ll}
%V_{11} & V_{12} \\
%V_{21} & V_{22}
%\end{array}\right), \\
%\] 
%and $V_{11}=-A_{x}+2\left(\beta+\lambda n\right)A, V_{12}=2A, V_{21}=2\left(\alpha+\lambda m \right)A+B$. Also,\begin{equation*}
%A=-\lambda^{-1}+2\left(v_{x}+2\beta v-u\right),\quad B=2\left(u_{x x}-2\beta u_{x}-4\alpha u_{x}\right).
%\end{equation*}

\section{Bi-Hamiltonian structure}
%\section{The gradients of the Xue-Du-Geng equation spectral parameter and its eigenvalue problem}
Bi-Hamiltonian structure is an important ingredient for a given integrable system.
We now show that, by applying the  spectral gradient method, we may obtain the  missing bi-Hamiltonian structure for the XDG equation \eqref{eq1}. 
 
Let us begin with the spatial part of spectral problem of \eqref{eq1}
\begin{equation}\label{eq6.1}
  \binom{\phi_{1}}{\phi_{2}}_{x}=\left(\begin{array}{cc}
        \beta+\lambda n & 1 \\
        \alpha+\lambda m & -\beta-\lambda n
        \end{array}\right)\binom{\phi_{1}}{\phi_{2}}.
\end{equation}
To proceed, we need the  adjoint problem which is given by
\begin{equation}\label{eq6}
\left(\psi_{1}, \psi_{2}\right)_{x}=-\left(\psi_{1}, \psi_{2}\right)\left(\begin{array}{cc}
    \beta+\lambda n & 1 \\
        \alpha+\lambda m & -\beta-\lambda n
    \end{array}\right).
\end{equation}

We now calculate the variational derivatives of $\lambda$ with respect to $m$ and $n$, which will be denoted by $\lambda_m$ and $\lambda_n$, respectively. 
The directional derivative of $\left(\phi_{1}, \phi_{2}\right)$ 
in the direction $m+\epsilon M$ is given by
\begin{equation}\label{eq5}
    \binom{\phi_{1}^{\prime}[M]}{\phi_{2}^{\prime}[M]}_{x}=\left(\begin{array}{cc}
        \beta+\lambda n & 1 \\
        \alpha+\lambda m & -\beta-\lambda n
        \end{array}\right)\binom{\phi_{1}^{\prime}[M]}{\phi_{2}^{\prime}[M]}+\left(\begin{array}{cc}
        \left\langle M, \lambda_{m}\right\rangle n& 0 \\
        \left\langle M, \lambda_{m}\right\rangle m+\lambda M & -\left\langle M, \lambda_{m}\right\rangle n
        \end{array}\right)\binom{\phi_{1}}{\phi_{2}},               
\end{equation}
 where $\langle\cdot, \cdot\rangle $ signifies the pairing between tangent vectors and cotangent vectors, and is defined by  
\begin{equation*}\langle\rho, \sigma\rangle=\int_{-\infty}^{\infty} \rho(x) \sigma(x) \mathrm{d} x.\end{equation*} 
Now multiplying Eq.(\ref{eq5}) from left by $\left(\psi_{1}, \psi_{2}\right) $ and integrating it  from $-\infty$ to $\infty$, we obtain
\begin{equation}\label{eq7}
\begin{aligned}
    \int_{-\infty}^{\infty}\left(\psi_{1}, \psi_{2}\right) \binom{\phi_{1}^{\prime}[M]}{\phi_{2}^{\prime}[M]}_{x} \mathrm{d} x= & \int_{-\infty}^{\infty}\left(\psi_{1}, \psi_{2}\right)\left(\begin{array}{cc}
        \beta+\lambda n & 1 \\
        \alpha+\lambda m & -\beta-\lambda n
    \end{array}\right)\binom{\phi_{1}^{\prime}[M]}{\phi_{2}^{\prime}[M]} \mathrm{d} x \\
    & +\int_{-\infty}^{\infty}\left(\psi_{1} ,\psi_{2}\right)\left(\begin{array}{cc}
        \left\langle M, \lambda_{m}\right\rangle n& 0 \\
        \left\langle M, \lambda_{m}\right\rangle m+\lambda M & -\left\langle M, \lambda_{m}\right\rangle n
    \end{array}\right)\binom{\phi_{1}}{\phi_{2}} \mathrm{~d} x .
    \end{aligned}
\end{equation}
Taking Eq.(\ref{eq6}) into consideration, we have %Above equation It follows directly from Eq.(\ref{eq6}) that the lhs of Eq.(\ref{eq7}) is equal to the first term on the rhs. Therefore, 
\begin{equation*}
    % \begin{array}{c}
        0=\int_{-\infty}^{\infty}\left(\psi_{1} ,\psi_{2}\right)\left(\begin{array}{cc}
            \left\langle M, \lambda_{m}\right\rangle n& 0 \\
            \left\langle M, \lambda_{m}\right\rangle m+\lambda M & -\left\langle M, \lambda_{m}\right\rangle n
        \end{array}\right)\binom{\phi_{1}}{\phi_{2}} \mathrm{~d} x, \\
    %\end{array}
    \end{equation*}
    i.e.\begin{equation}\label{eq8}
    %\begin{array}{c}
        0=\left\langle M, \lambda_{m}\right\rangle \int_{-\infty}^{\infty}\left(n \psi_{1} \phi_{1}-n \psi_{2} \phi_{2}+m \psi_{2} \phi_{1}\right) \mathrm{d} x+\lambda \int_{-\infty}^{\infty} M \psi_{2} \phi_{1} \mathrm{~d} x.
    %\end{array}
\end{equation}
As Eq.(\ref{eq8}) holds for arbitrary $M$,   
%Given the fact that Eq.(\ref{eq8}) holds for arbitrary $M$ and the restriction that 
%\begin{equation*}
%    \int_{-\infty}^{\infty}\left(n \psi_{1} \phi_{1}-n \psi_{2} \phi_{2}+m \psi_{2} \phi_{1}\right) \mathrm{d} x\neq 0,
%\end{equation*}
we derive the gradient of $\lambda$ with respect to $m$
\begin{equation*}
    \lambda_{m}=-\frac{\lambda \psi_{2} \phi_{1}}{\int_{-\infty}^{\infty}\left(n \psi_{1} \phi_{1}-n \psi_{2} \phi_{2}+m \psi_{2} \phi_{1}\right) \mathrm{d} x}.
\end{equation*}

Similarly, the gradient of $\lambda$ with respect to $n$ may be obtained as  
\begin{equation*}
        \lambda_{n}=-\frac{\lambda (\psi_{1} \phi_{1}-\psi_{2} \phi_{2})}{\int_{-\infty}^{\infty}\left(n \psi_{1} \phi_{1}-n \psi_{2} \phi_{2}+m \psi_{2} \phi_{1}\right) \mathrm{d} x}.
\end{equation*}
Without loss of generality, the constant coefficient can be omitted, and we have
\begin{equation*}
\binom{\lambda_{m}}{\lambda_{n}} \propto\binom{\psi_{2} \phi_{1}}{\psi_{1} \phi_{1}-\psi_{2} \phi_{2}}
\end{equation*}which satisfies
\newtheorem{prop}{Proposition}
\begin{prop}
The gradients of  the spectral parameter of the spectral problem \eqref{eq6.1} satisfy  
\begin{equation*}
\mathcal{E}_{2}\binom{\lambda_{m}}{\lambda_{n}}=2\lambda \mathcal{E}_{1} \binom{\lambda_{m}}{\lambda_{n}},
\end{equation*}
where
\begin{equation*}
    \mathcal{E}_{1} =\left(\begin{array}{cc}
       m \partial_{x}+\partial_{x} m& n \partial_{x}\\
       \partial_{x} n&0
       \end{array}\right),\quad \mathcal{E}_{2}= \left(\begin{array}{cc}
             -4\alpha \partial_{x}&-\partial_{x}^{2}-2\beta \partial_{x}\\
        \partial_{x}^{2}-2\beta \partial_{x}& \partial_{x}  
     \end{array}\right).
\end{equation*} 

%\begin{equation*}
%    \left(\begin{array}{cc}
%       -4\alpha \partial_{x}&-\partial_{x}^{2}-2\beta \partial_{x}\\
%        \partial_{x}^{2}-2\beta \partial_{x}& \partial_{x}     \end{array}\right)\binom{\lambda_{m}}{\lambda_{n}}=2 \lambda\left(\begin{array}{cc}
%       m \partial_{x}+\partial_{x} m& n \partial_{x}\\
%       \partial_{x} n&0
%      \end{array}\right)\binom{\lambda_{m}}{\lambda_{n}}.       
%        \end{equation*}
   \end{prop}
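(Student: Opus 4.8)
The plan is to verify the identity $\mathcal{E}_2 \binom{\lambda_m}{\lambda_n} = 2\lambda\,\mathcal{E}_1\binom{\lambda_m}{\lambda_n}$ by direct computation, using the fact that $\phi=(\phi_1,\phi_2)^{\mathsf T}$ solves the spatial linear problem \eqref{eq6.1} and $\psi=(\psi_1,\psi_2)$ solves the adjoint problem \eqref{eq6}, and that $\lambda_m \propto \psi_2\phi_1$, $\lambda_n \propto \psi_1\phi_1 - \psi_2\phi_2$. First I would record the first-order derivatives that follow from \eqref{eq6.1} and \eqref{eq6}, namely
\begin{align*}
(\phi_1)_x &= (\beta+\lambda n)\phi_1 + \phi_2, & (\phi_2)_x &= (\alpha+\lambda m)\phi_1 - (\beta+\lambda n)\phi_2,\\
(\psi_1)_x &= -(\beta+\lambda n)\psi_1 - (\alpha+\lambda m)\psi_2, & (\psi_2)_x &= -\psi_1 + (\beta+\lambda n)\psi_2.
\end{align*}
From these I would derive the $x$-derivatives of the three quadratic combinations $P:=\psi_2\phi_1$, $Q:=\psi_1\phi_1-\psi_2\phi_2$, and the auxiliary $R:=\psi_1\phi_1+\psi_2\phi_2$ (or equivalently $\psi_1\phi_1$ and $\psi_2\phi_2$ separately). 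A short calculation gives closed expressions such as $P_x = -2(\beta+\lambda n)P + \psi_1\phi_1$, while $(\psi_1\phi_1)_x$ and $(\psi_2\phi_2)_x$ produce terms involving $(\alpha+\lambda m)P$ and $P$; the point is that the system of these quadratic quantities closes under $\partial_x$ with coefficients linear in $\lambda m$, $\lambda n$, $\alpha$, $\beta$.

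Next I would compute each component of the left-hand side $\mathcal{E}_2\binom{P}{Q}$ and of the right-hand side $2\lambda\,\mathcal{E}_1\binom{P}{Q}$ explicitly as differential polynomials in $P,Q$ (and hence, after substituting the derivative formulas above, as polynomials in $\phi_i,\psi_j,m,n,\alpha,\beta,\lambda$). The first component requires expanding $-4\alpha P_x - (\partial_x^2+2\beta\partial_x)Q$ and matching it against $2\lambda\big[(m\partial_x+\partial_x m)P + n\partial_x Q\big]$; the second component requires $(\partial_x^2 - 2\beta\partial_x)P + \partial_x Q$ versus $2\lambda\,\partial_x(nP)$. In carrying this out one will need the second derivatives $P_{xx}$, $Q_{xx}$, obtained by differentiating the first-order formulas once more and re-substituting; all $\phi,\psi$ appear only through the quadratic products, so everything reduces to algebra in $P$, $Q$, $R$ and then the $R$-dependence should cancel.

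I expect the main obstacle to be purely organizational: keeping track of the proliferation of terms when the second derivatives are expanded and ensuring that the non-$\lambda$ terms on the left cancel among themselves (so that the left side is genuinely $O(\lambda)$, matching the right side) and that the coefficients of $\lambda$ then agree. A cleaner route, which I would pursue in parallel, is to avoid second derivatives by noting that the vector $\binom{P}{Q}$ is essentially the "squared-eigenfunction" gradient: one can package the three quadratics into a $3$-vector $W=(\psi_1\phi_1,\psi_2\phi_1,\psi_2\phi_2)^{\mathsf T}$ satisfying a first-order system $W_x = A(\lambda)W$ with $A$ affine in $\lambda$, and then both $\mathcal{E}_1 W$-type and $\mathcal{E}_2 W$-type expressions become first-order in $W_x$; the claimed identity then follows by comparing the two sides coefficient-by-coefficient in the entries of $W$ using only $W_x = A(\lambda)W$, with no integration and no boundary terms. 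Either way the proposition is an identity check; the substantive content — that $\mathcal{E}_1,\mathcal{E}_2$ are the Hamiltonian operators — is what the recursion $\mathcal{E}_2 = 2\lambda\mathcal{E}_1$ on spectral gradients is designed to reveal.
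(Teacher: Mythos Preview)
Your plan is exactly the paper's: derive the first-order $x$-derivatives of the quadratic eigenfunction products, differentiate once more, and read off the two rows of $\mathcal{E}_2\binom{P}{Q}=2\lambda\,\mathcal{E}_1\binom{P}{Q}$ directly. Two small corrections will save you time when you execute it. First, your sample formula is off: one gets $P_x = 2(\beta+\lambda n)P - Q$, not $-2(\beta+\lambda n)P+\psi_1\phi_1$. Second, the auxiliary that actually closes the first-order system is $S:=\psi_1\phi_2$, not $R=\psi_1\phi_1+\psi_2\phi_2$; in fact $R_x\equiv 0$, and your proposed $3$-vector $W=(\psi_1\phi_1,\psi_2\phi_1,\psi_2\phi_2)^{\mathsf T}$ does not satisfy $W_x=A(\lambda)W$ because $\psi_1\phi_2$ appears on the right. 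With $(P,Q,S)$ the closed system is
\[
P_x=2(\beta+\lambda n)P-Q,\qquad Q_x=2S-2(\alpha+\lambda m)P,\qquad S_x=(\alpha+\lambda m)Q-2(\beta+\lambda n)S,
\]
and the paper proceeds precisely by differentiating $P_x$ and $Q_x$ once more, eliminating $S$ via $Q_x$, and arranging the result into the two components of the claimed matrix identity.
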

\begin{proof}
    From Eqs.(\ref{eq6.1}) and (\ref{eq6}), we easily obtain
        \begin{align}
    	\left(\psi_{2} \phi_{1}\right)_{x}&=2 \left(\beta+\lambda n\right) \psi_{2} \phi_{1}-\left(\psi_{1} \phi_{1}-\psi_{2} \phi_{2}\right), \label{eq7-000}\\
	           \left(\psi_{1} \phi_{2}\right)_{x}&=\left(\alpha +\lambda m\right) \left(\psi_{1} \phi_{1}-\psi_{2} \phi_{2}\right)-2 \left(\beta+\lambda n\right) \psi_{1} \phi_{2},\label{eq7-00}\\
\label{eq7-0}            \left(\psi_{1} \phi_{1}-\psi_{2} \phi_{2}\right)_{x}&=2 \psi_{1} \phi_{2}-2\left(\alpha+\lambda m\right) \psi_{2} \phi_{1}.
        \end{align}
Also, we notice that
    \begin{equation*}\label{eq9}
        \begin{aligned}
            \left(\psi_{1} \phi_{1}-\psi_{2} \phi_{2}\right)_{x x} =&-\big(2 \lambda m_{x}+4\left(\alpha+\lambda m\right) \left(\beta+\lambda n\right)\big)\psi_{2} \phi_{1}+4\left(\alpha+\lambda m\right) \left(\psi_{1} \phi_{1} -\psi_{2} \phi_{2}\right)\\&-4\left(\beta+\lambda n\right) \psi_{1} \phi_{2},
        \end{aligned}
    \end{equation*}
which, after eliminating $\psi_{1} \phi_{2}$ by \eqref{eq7-0} and taking \eqref{eq7-000} into consideration, may be reformulated as 
%{\color{red}Last two equations},  yield  %we Last two equations, By observation, one can find that adding $2\left(\beta+\lambda n\right) \left(\psi_{1} \phi_{1}-\psi_{2} \phi_{2}\right)_{x} $ to Eq.(\ref{eq9}) can elimitate the term with $\psi_{1} \phi_{2}$,
\begin{equation}\label{11.1}
    \left(\psi_{1} \phi_{1}-\psi_{2} \phi_{2}\right)_{x x}+2 \left(\beta+\lambda n\right) \left(\psi_{1} \phi_{1}-\psi_{2} \phi_{2}\right)_{x}=-4\alpha \left(\psi_{2} \phi_{1}\right)_{x}-2 \lambda \left(m \partial_{x}+\partial_{x} m\right) \psi_{2} \phi_{1}.
\end{equation} 
Similarly, we find
\begin{align*}\label{eq10}
    \left(\psi_{2} \phi_{1}\right)_{x x}=2 \lambda n_{x} \psi_{2} \phi_{1}+\left(4\left(\beta+\lambda n \right)^{2}+2\left(\alpha+\lambda m\right)\right)\psi_{2} \phi_{1}-2\left(\beta+\lambda n\right) \left(\psi_{1} \phi_{1}-\psi_{2} \phi_{2}\right)-2 \psi_{1} \phi_{2},
\end{align*}
which, with the aid of Eqs. \eqref{eq7-000}-\eqref{eq7-0}, may be rewritten as
%adding $\left(\psi_{1} \phi_{1}-\psi_{2} \phi_{2}\right)_{x}$ to Eq.(\ref{eq10}) can eliminate the term with $\psi_{1} \phi_{2}$ as well,
   \begin{equation}\label{11.2}
     \begin{aligned}
         \left(\psi_{2} \phi_{1}\right)_{x x}+\left(\psi_{1} \phi_{1}-\psi_{2} \phi_{2}\right)_{x}=2\lambda \left( n  \psi_{2} \phi_{1}\right)_{x}+2\beta \left(\psi_{2} \phi_{1}\right)_{x}.
     \end{aligned}
   \end{equation}
Thus, writing Eqs.(\ref{11.1}) and (\ref{11.2}) in matrix form, one can get the following eigenvalue problem of the spectral gradients
\begin{equation*}\label{eq11-3}
    \left(\begin{array}{cc}
       -4\alpha \partial_{x}&-\partial_{x}^{2}-2\beta \partial_{x}\\
        \partial_{x}^{2}-2\beta \partial_{x}& \partial_{x}     \end{array}\right)\binom{\lambda_{m}}{\lambda_{n}}=2 \lambda\left(\begin{array}{cc}
       m \partial_{x}+\partial_{x} m& n \partial_{x}\\
       \partial_{x} n&0
      \end{array}\right)\binom{\lambda_{m}}{\lambda_{n}},      
        \end{equation*}
and the proposition is proved.
\end{proof}     
%The differential operartors appeared in \eqref{eq11-3} is denoted by %Let To simplify the formula, we introduce the operators
%\begin{equation*}
%    \mathcal{E}_{1} =\left(\begin{array}{cc}
%       m \partial_{x}+\partial_{x} m& n \partial_{x}\\
%       \partial_{x} n&0
%       \end{array}\right),\quad \mathcal{E}_{2}= \left(\begin{array}{cc}
%             -4\alpha \partial_{x}&-\partial_{x}^{2}-2\beta \partial_{x}\\
%        \partial_{x}^{2}-2\beta \partial_{x}& \partial_{x}  
%     \end{array}\right).
%\end{equation*} 
%  In this way, the eigenvalue problem can be written as
%  \begin{equation}\label{eq13}
%    \mathcal{E}_{2}\binom{\lambda_{m}}{\lambda_{n}}=\lambda \mathcal{E}_{1} \binom{\lambda_{m}}{\lambda_{n}}.
%  \end{equation}
%Thus the recursion operator of the Xue-Du-Geng hierarchy is given by $\mathcal{R}=\left(\mathcal{E}_{1}^{-1} \mathcal{E} _{2}\right)^{\dagger}$, where $\dagger$ denotes the adjoin operator. 
%
The operator $\mathcal{E}_{2}$ is obviously a Hamiltonian operator. It is not difficult to prove that the operator $\mathcal{E}_{1}$ also is a Hamiltonian operator. Rather than offering a direct proof, we will show in the next section that these operators may be related to the Hamiltonian operators of the two-component CH equation. This fact implies that  $\mathcal{E}_{1}$ and  $\mathcal{E}_{2}$ constitute a compatible Hamiltonian pair.

%, and $\mathcal{E}_{2}$ are skew-symmetric and the operator $\mathcal{E}_{2}$ is a Hamiltonian operator since its constant-coefficient character. Besides, $\mathcal{E}_{1}$ proved also a Hamiltonian operator, which is compatible with $\mathcal{E}_{2}$. 
Accordingly, we may generate a hierarchy from the spectral problem \eqref{eq6.1} and in particular Eq.\eqref{eq1} is a bi-Hamiltonian system 
\begin{equation*}
   \binom{m}{n}_{t} =\mathcal{E}_1 \binom{\frac{\delta{{H}}_{1}}{\delta{m}}}{\frac{\delta{{H}}_{1}}{\delta{n}}}
    =\mathcal{E}_{2}\binom{\frac{\delta {H}_{2}}{\delta{m}}}{\frac{\delta{{H}}_{2}}{\delta{n}}},\quad {H}_{i}= \frac{1}{2}\int h_i dx,\quad (i=1, 2),
     \end{equation*}
 where% Especially, for Eq.(\ref{eq1}) we have, 
\begin{equation*}
    \begin{aligned}
	h_{1}=&\left(v_{x}+2 \beta v-u\right)\left(n_{x}+2 \beta n+m\right)-n^{2}, \\ 
	h_{2}=&\left(u-v_{x}-2 \beta v\right)\left[\left(v_{x x}+2 \beta v_{x}-u_{x}\right)^{2}+4\left(\alpha+\beta^{2}\right)\left(v_{x}+2 \beta v-u\right)^{2}-n^{2}\right].
	     \end{aligned}
\end{equation*} 
%Especially, when $u=0$ and $\alpha=0$, the explicit expression of  
%\begin{equation*}
%\left\{\begin{array}{l}
%n_{t}=4\left[n\left(v_{x}+ v\right) \right]_{x} , \\
%n=v-v_{x x}.
%\end{array}\right.
%\end{equation*}
%is derived as
%\begin{equation*}
%\mathcal{R} =\left(\begin{array}{cc}
%    -\left(\partial_{x}+1\right) \frac{1}{n} &0\\
%    \frac{1}{n}&\partial_{x} \left(\partial_{x}-1\right) \frac{1}{n} \partial_{x}^{-1}
% \end{array}\right)
%\end{equation*}
\section{Connection to the two-component CH equation}
We now make the following scaling
\[
\partial_x\to \kappa\partial_x, \quad \partial_t\to 4\kappa\partial_t, \quad m\to \kappa^2 m, \quad n\to \kappa^2 n \quad (\kappa^2=4(\alpha+\beta^2))
\]
so that Eq.  \eqref{eq1} becomes
\begin{align}\label{eq1-1}
    \left\{     
        \begin{aligned}
    m_{ t}&=(\partial_x m+m \partial_x)\left(\kappa v_{ x}+2 \beta v-u\right)+ 
    n \left(\kappa u_{x}-2 \beta u-4 \alpha v \right)_x, \\
   n_{ t}&=  \big[n \left(\kappa v_{ x}+2 \beta v-u\right)\big]_x ,\\
   m&=u_{x x}-u, \quad   n=v-v_{ x x},
\end{aligned}
\right.
\end{align} 
and we take 
\begin{equation*}
    \mathcal{E}_{1} =\left(\begin{array}{cc}
       m \partial_{x}+\partial_{x} m& n \partial_{x}\\
       \partial_{x} n&0
       \end{array}\right),\quad \mathcal{E}_{2}=\frac{1}{\kappa^2} \left(\begin{array}{cc}
             -4\alpha \partial_{x}&-\kappa\partial_{x}^{2}-2\beta \partial_{x}\\
        \kappa\partial_{x}^{2}-2\beta \partial_{x}& \partial_{x}  
     \end{array}\right),
\end{equation*} 
as its Hamiltonian operators.

Let us   recall the two-component CH equation  \cite{clz2006,olver1996}. It is given by %brated integrable Camassa Holm type equation given by Chen, Liu and Zhang, so-called ,
\begin{equation}
    \begin{array}{l}\label{key2-1}
        \left\{\begin{array}{l}
        p_{t}+q p_{x}+2 p q_{x}-\rho \rho_{x}=0 ,\\
        \rho_{t}+\left(\rho q\right)_{x}=0 ,\\
        p=q-q_{xx},
        \end{array}\right.
    \end{array}
    \end{equation}
which is also a bi-Hamiltonian system, namely
 \begin{equation*}
         \begin{pmatrix}p\\ \rho\end{pmatrix}_t=\mathcal{B}_{1} \binom{\frac{\delta {H}_{1}}{\delta{p}}}{\frac{\delta{{H}}_{2}}{\delta{\rho}}}
   =\mathcal{B}_{2}\binom{\frac{\delta {H}_{2}}{\delta{p}}}{\frac{\delta{{H}}_{2}}{\delta{\rho}}}, 
     \end{equation*}
  where   
  \begin{align*}
   \mathcal{B}_{1}=\left(\begin{array}{cc}
         -p\partial_{x}-\partial_{x} p & -\rho \partial_{x} \\
          -\partial_{x} \rho & 0\\
         \end{array}\right),\quad
         \mathcal{B}_{2}=\left(\begin{array}{cc}
          \partial_{x}^{3}-\partial_{x} & 0 \\
          0 &  \partial_{x}
       \end{array}\right)
 \end{align*}
%% are two Hamiltonian operators, 
 and 
 \begin{align*}
  {H}_{1} = \frac{1}{2}\int  \left( q p-\rho^{2}\right)dx,\quad
  {H}_{2}= \frac{1}{2}\int q\left(q^2+q_x^2-\rho^{2}\right) dx.\end{align*}
are Hamiltonian functionals.

It is interesting that Eq. \eqref{eq1-1} is  related to Eq. \eqref{key2-1} and the result is summarized as follows.
\begin{prop}
The transformation defined by 
\begin{align}\label{trans1}
      \left\{\begin{aligned}
      p&=-(\kappa{n} _{{x}}+2 \beta {n}+{m}),\\
	\rho&=\kappa{n}, \\
	  q&=-\kappa v_{x}-2\beta v+u, 
      \end{aligned}\right.
 \end{align}
 is a Miura type transformation between \eqref{eq1-1} and \eqref{key2-1}.
\end{prop}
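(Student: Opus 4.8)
The plan is to verify that \eqref{trans1} carries every solution of \eqref{eq1-1} to a solution of \eqref{key2-1}, and then to check that its Fr\'echet derivative intertwines the two Hamiltonian pairs; the latter is what makes it a transformation of Miura type and, as a by-product, establishes the Hamiltonian property of $\mathcal{E}_{1}$ asserted above. I would start with the constraint: differentiating $q=u-\kappa v_x-2\beta v$ twice and using $m=u_{xx}-u$ and $n=v-v_{xx}$ (whence $n_x=v_x-v_{xxx}$) gives immediately $q-q_{xx}=-(m+\kappa n_x+2\beta n)=p$, so the third relation in \eqref{key2-1} holds identically on the image of \eqref{trans1}.

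The substantive step is one auxiliary identity. Writing $w:=\kappa v_x+2\beta v-u=-q$, a short computation that uses $n=v-v_{xx}$, $\rho=\kappa n$ and, crucially, $\kappa^{2}=4(\alpha+\beta^{2})$ --- equivalently $4\alpha=\kappa^{2}-4\beta^{2}$ --- yields
\begin{equation*}
\kappa u_x-2\beta u-4\alpha v=\kappa q_x-2\beta q-\kappa\rho .
\end{equation*}
I expect this to be the main obstacle: not that it is hard, but it is the only place where the specific scaling producing \eqref{eq1-1} is exploited, and it is precisely what converts the $u,v$-dependence on the right-hand sides of \eqref{eq1-1} into $q,\rho$-dependence. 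All that follows is bookkeeping.

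Next I would substitute. The second equation of \eqref{eq1-1} gives $\rho_t=\kappa(nw)_x=-(\rho q)_x$ at once, which is the $\rho$-equation in \eqref{key2-1}. For $p=-(m+\kappa n_x+2\beta n)$ I would form $p_t=-(m_t+\kappa n_{xt}+2\beta n_t)$, insert $m_t$ and $n_t$ from \eqref{eq1-1} after replacing $w$ by $-q$ and $(\kappa u_x-2\beta u-4\alpha v)_x$ by $\kappa q_{xx}-2\beta q_x-\kappa\rho_x$ via the identity above, and regroup: the terms proportional to $q$ collect into $qp_x$, those proportional to $q_x$ into $2pq_x$, the $nq_{xx}$ contributions cancel, and the leftover $\kappa n\rho_x$ equals $\rho\rho_x$, so $p_t+qp_x+2pq_x-\rho\rho_x=0$. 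The $q$-equation in \eqref{key2-1} is then automatic, since $p=(1-\partial_x^{2})q$ and $1-\partial_x^{2}$ commutes with $\partial_t$.

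Finally, to justify ``Miura type'' I would record the Fr\'echet derivative of $(m,n)\mapsto(p,\rho)$,
\begin{equation*}
\Phi'=\begin{pmatrix} -1 & -(\kappa\partial_x+2\beta)\\ 0 & \kappa\end{pmatrix},\qquad
(\Phi')^{\dagger}=\begin{pmatrix} -1 & 0\\ \kappa\partial_x-2\beta & \kappa\end{pmatrix},
\end{equation*}
and check by direct operator multiplication that $\Phi'\,\mathcal{E}_{1}\,(\Phi')^{\dagger}=\mathcal{B}_{1}$ and $\Phi'\,\mathcal{E}_{2}\,(\Phi')^{\dagger}=\mathcal{B}_{2}$, the second again using $4\alpha=\kappa^{2}-4\beta^{2}$. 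Since $\mathcal{B}_{1},\mathcal{B}_{2}$ form a compatible Hamiltonian pair, so do $\mathcal{E}_{1},\mathcal{E}_{2}$, and \eqref{trans1} is a Miura transformation in the usual sense; here the only care needed is the routine commutation of $\partial_x$ past multiplication operators.
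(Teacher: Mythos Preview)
Your proposal is correct and follows the same approach as the paper: the paper merely states that the proposition ``may be proved by a direct calculation'' and then records the Fr\'echet derivative $\mathcal{M}=\Phi'$ together with the intertwining relations $\mathcal{B}_i=\mathcal{M}\mathcal{E}_i\mathcal{M}^{\dagger}$, exactly as you outline. Your auxiliary identity $\kappa u_x-2\beta u-4\alpha v=\kappa q_x-2\beta q-\kappa\rho$ and the subsequent regrouping are precisely the content of that unwritten direct calculation.
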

The proposition, which may be proved  by a direct calculation, shows that system \eqref{eq1-1} is a modification of \eqref{key2-1}. Furthermore, we can easily check the following 
\begin{equation*}
\mathcal{B}_1=\mathcal{M}\mathcal{E}_1 \mathcal{M}^{\dagger}, \quad
\mathcal{B}_2=\mathcal{M} \mathcal{E}_2 \mathcal{M}^{\dagger},
\end{equation*} 
where 
\begin{equation*}
    \mathcal{M}=\left(\begin{array}{cc} 
   -1&-\kappa \partial_x-2\beta\\
   0& \kappa
    \end{array}\right)
\end{equation*}
is the Frechet derivative of the mapping defined by the first two equations of \eqref{trans1}. Therefore, as $\mathcal{B}_1$ and $\mathcal{B}_2$ are compatible Hamiltonian operators, $\mathcal{E}_1$ and $ \mathcal{E}_2$ have the same property.

\section{Reductions}
      Eq. \eqref{eq1} is a two-component system, so it is interesting to consider its reductions. When $v=0$ it reduces to the Camassa-Holm equation \eqref{eq0-0}, which has been studied extensively. For $u=0$ and $\alpha=0$, one obtains Eq. \eqref{eq01} %has the following equation
%\begin{align}\label{eq14}
%    \left\{     
%        \begin{aligned}
%       n_{ t}&=4 [ n \left(v_{ x}+2 \beta v\right)]_{x},\\
%    n&=4\beta^{2} v-v_{ x x},
%\end{aligned}\right.
%\end{align} 
which was considered as a new peakon equation by Xue, Du and Geng \cite{xdg2023}. Indeed, they also showed that this equation admits $N$-peakon solution.

To get a better understanding of \eqref{eq01}, we consider 
\begin{equation}\label{eq17}
n_{t}=\left(-\big(\frac{1}{n}\big)_{x}+2 \beta \frac{1}{n}\right)_{x},
\end{equation}
which shares the same hierarchy with \eqref{eq01}. To see it, we notice that Eq. \eqref{eq17} has the following spectral problem
\begin{align*}\label{eq16}
   \binom{\phi_{1}}{\phi_{2}}_{x}&=\left(\begin{array}{cc}
        \beta+\lambda n & 1 \\
        0 & -\beta-\lambda n
        \end{array}\right)\binom{\phi_{1}}{\phi_{2}},\\
%\end{equation}
%\begin{equation}\label{eq15}
\binom{\phi_{1}}{\phi_{2}}_{t}&=\left(\begin{array}{cc}
       2\lambda^2+\left(\frac{2\beta}{n}-(\frac{1}{n})_x\right)\lambda& \frac{2}{n} \lambda \\
        0 & - 2\lambda^2-\left(\frac{2\beta}{n}-(\frac{1}{n})_x\right)\lambda    
\end{array}\right)\binom{\phi_{1}}{\phi_{2}}.
\end{align*}

%
%In this section, we will give out the recursion operator of this hierarchy and all the commuting flows can be generated by virtue of it. To this end we consider another equation of this hierarchy whose time part of spectral problem is as follows, 
%
%since they share the same recursion operator.
%The compatibility condition of (\ref{eq16}) and (\ref{eq15}) yields the equation
%\begin{equation}\label{eq17}
%n_{t}=\left(-\big(\frac{1}{n}\big)_{x}+2 \beta \frac{1}{n}\right)_{x}.
%\end{equation}
Eq. \eqref{eq17} is in conservation form, so we define the reciprocal transformation $(x, t, n(x,t)) \to (y, \tau, s(y,\tau))$ defined by
\begin{equation}\label{reci}
\left\{\begin{aligned}
n(x,t)&=s(y,\tau),\\
%d y=n d x+\left(-\big(\frac{1}{n}\big)_{x}+2  \beta \frac{1}{n}\right) d t ,\quad
%d \tau=d t .
%\end{equation*}
%i.e.\begin{equation}
%&\frac{\partial}{\partial x}=n \frac{\partial}{\partial y},\\
%&\frac{\partial}{\partial t}=\left(-n\left(\frac{1}{n}\right)_{y}+2 \beta \frac{1}{n}\right) 	\frac{\partial}{\partial y}+\frac{\partial}{\partial \tau}.
\partial_{x}&=n\partial _{y},\\
\partial_{ t}&=\left(-n\left(\frac{1}{n}\right)_{y}+2 \beta \frac{1}{n}\right) 	\partial _{y}+\partial _{\tau}.
\end{aligned}\right.\end{equation}
This transformation brings Eq.(\ref{eq17})  to 
\begin{equation}\label{eq20}s_{\tau}=s_{y y}- \frac{2}{s}\left(s_{y}^{2}+2  \beta s_{y}\right),
\end{equation}
which, after setting  $\displaystyle
%\begin{equation}\label{trans3}
s=\frac{1}{\omega}$, becomes the celebrated Burgers equation
%\end{equation}we obtain
\begin{equation}\label{eq18}
\omega_{\tau}=\omega_{y y}-4\beta \omega \omega_{y}.
\end{equation}
%This is nothing but Burgers equation, and by substituting
% \begin{equation}\label{trans2}\omega=-\frac{1}{2\beta}\left(\ln f\right)_{y}\end{equation}it becomes heat equation as follows,
%\begin{equation}\label{eq19}
%f_{\tau}=f_{y y}.
%\end{equation}
%Clearly,
%\begin{equation*}
%R_{f}=\partial_{y},
%%,\quad \mathcal{D}=\frac{d}{dy}
%\end{equation*}
%is recursion operator of heat equation.
%Through the transformations (\ref{trans2}), we have the Bäcklund  transformation between Eqs.(\ref{eq18}) and (\ref{eq19})
%\begin{equation*}B(\omega ,f)=2\beta \omega f+f_{y}=0,
%\end{equation*}
%thereby the recursion operator of  
It is well known that the Burgers equation possesses the following recursion operator
\begin{equation*}R_{\omega}=%(2\beta f)^{-1}(\partial_{y}+2\beta \omega)R_{f}(\partial_{y}+2\beta \omega)^{-1} (2\beta f)=
\partial_{y}-2 \beta \omega-2 \beta \omega _{y} \partial_{y}^{-1}. 
\end{equation*}
So we obtain a recursion operator %Likewise, through the transformations (\ref{trans3}), we have the Bäcklund  transformation between Eq.(\ref{eq20}) and (\ref{eq18})
%\begin{equation*}B(s,\omega)=s \omega-1=0,
%\end{equation*} 
%and get the recursion operator of Eq.(\ref{eq20}) is
\begin{equation*}R_{s}=\omega^{-1} sR_{\omega}s^{-1}
 \omega=s^{2}\left(\partial_{y}-2 \beta s^{-1}+2 \beta s^{-2} s_{y} \partial_{y}^{-1}\right)s^{-2}
\end{equation*}
for Eq.\eqref{eq20}. 

In addition, the reciprocal transformation \eqref{reci} between Eqs.(\ref{eq17}) and (\ref{eq20}) may  be formulated as a B\"acklund  transformation
%\begin{equation}
%B(n,s)=n(x)-s((D^{-1} n^{-1})(x))=0, \quad D=\frac{d}{dx}.
%\end{equation}
%Since the known recursion operator of Eq.(\ref{eq20}) we shall obtain the recursion operator for Eq.(\ref{eq17}), 
%\begin{equation}R_{n}=(1+n n_{x} D^{-1} n^{-2})R_{s}(1+n n_{x} D^{-1} n^{-2})^{-1}.
%\end{equation}
%Followed by $D=n^{-1} \mathcal{D}$, 
%\begin{equation}
%R_{n}=(1+n n_{x} D^{-1} n^{-2})(n^{2}\left(nD-2 \beta n^{-1}+2 \beta n^{-2} n_{y} n D^{-1}\right)n^{-2})(1+n n_{x} D^{-1} n^{-2})^{-1}.\end{equation}
\begin{equation*}
B(n,s)=n(x)-s((\partial_{x}^{-1} n)(x))=0.%, \quad D=\frac{d}{dx}.
\end{equation*}
Thus, from the recursion operator $R_s$ for Eq.(\ref{eq20}), we find a recursion operator for Eq.(\ref{eq17}), namely
\begin{equation*}R_{n}=\partial_{x} n \partial_{x}^{-1} n^{-1} R_{s} n \partial_{x} n^{-1} \partial_{x}^{-1}=\partial_{x}(\partial_{x}-2 \beta) n^{-1} \partial_{x}^{-1}.
\end{equation*}
It is interesting to observe that in terms of above operator Eqs.(\ref{eq01}) and (\ref{eq17}) can be reformulated as 
\begin{equation*}
%R_{n}^{-1} n_{t_{1}}=0,\quad n_{t_{2}}=R_{n}^{-1} n_{x}
n_{t }=4R_{n}^{-1} n_{x},\quad R_{n}^{-1} n_{t }=0,
\end{equation*} 
respectively. 
In conclusion, we have shown that both Eq. \eqref{eq01} and Eq. \eqref{eq17} belong to a hierarchy which is reciprocally related to the Burgers hierarchy.

\section*{Acknowledgments}

This work is supported by the National Natural Science Foundation of China (Grant Nos. 11931107 and 12171474).
\bibliographystyle{unsrt}
\bibliography{XDGbib01}

\end{document}